\newtheorem{theorem}{Theorem}
\newtheorem{prop}[theorem]{\textbf{Proposition}}
\begin{document}

\title{A Note on the Rate Region of Exact-Repair Regenerating Codes}
\author{Chao Tian}
\maketitle

\begin{abstract}
The rate region of the $(5,4,4)$ exact-repair regenerating codes is provided. The outer bound is obtained through extension of the computational approach developed in an earlier work, and this region is indeed achievable using the canonical layered codes. This result is part of the online collection of ``Solutions of Computed Information Theoretic Limits (SCITL)''. 
\end{abstract}

\section{Introduction}

The precise storage-repair-bandwidth tradeoff, or the rate region, of exact-repair regenerating codes turns out to be more difficult to characterize than that of the functional-repair version, the latter of which has been known for a few years \cite{Dimakis:10}. In a recent work \cite{Tian:JSAC13}, the author provided a characterization of the $(n,k,d)=(4,3,3)$ exact-repair regenerating codes, {\em i.e.,} when there are $n=4$ storage nodes, any $k=3$ of the nodes can completely recover the stored data, and any failed node can be repaired by the remaining $d=3$ nodes. This result conclusively answered in the affirmative the question whether there is a material separation between the tradeoff of the exact-repair version and  that of the functional-repair version. The converse proof in this characterization was obtained using a less conventional computational approach, based on a strategic application of Yeung's linear programming formulation for information inequalities \cite{Yeung:97}. 

Several analytically derived outer bounds for general exact-repair regenerating codes, or for the restricted setting of {\em linear} exact-repair regenerating codes, were later discovered in recent works \cite{Sasidharan:14, Duursma:14,Prakash:15, Mohajer:15}. In the restricted linear code setting, the rate region for the $(n,k=n-1,d=n-1)$ codes was given in \cite{Prakash:15}. For general coding functions, a partial characterization was given in \cite{Mohajer:15} for the $(5,4,4)$ codes. The complete characterization of the fundamental tradeoff for the $(5,4,4)$ case is however not yet available. In this short note, we provide the characterization of the $(5,4,4)$ exact-repair regenerating codes, the converse of which is obtained using the computational approach developed in \cite{Tian:JSAC13}. The proof is given as tabulation without a  translation into the conventional chains of inequalities. 

Though this particular piece of result was obtained by the author much earlier, the motivation to put it in writing was a recent conversation with Dr. Tie Liu, who pointed out that its availability in public domain may be helpful for future researchers, though from the perspective of developing the computational approach it is an extension of \cite{Tian:JSAC13} with more variables and constraints. This characterization and outer bounds for other cases of exact-repair regenerating codes ({\em e.g.,} the $(5,3,4)$ case), as well as bounds for several other storage and communication problems, are (or will be) included in the online collection of ``Solutions of Computed Information Theoretic Limits (SCITL)'' hosted at \cite{TianWebpage}.

\section{The Rate Region of $(5,4,4)$ Regenerating Codes}
\label{sec:main}

The main result of this note is the following theorem, where $\bar{\alpha}$  and $\bar{\beta}$ are the per-node storage capacity and per-helper repair bandwidth, respectively, which are normalized by the total amount of data $B$.

\begin{theorem}
The rate region of the $(5,4,4)$ exact-repair codes is the collection of $(\bar\alpha,\bar\beta)$ pairs satisfying the following conditions,
\begin{align}
4\bar{\alpha}\geq 1,\quad 3\bar\alpha+\bar\beta\geq 1, \quad 15\bar\alpha+10\bar\beta\geq 6,\quad 5\bar\alpha+10\bar\beta\geq 3,\quad 10\bar{\beta}\geq 1. \label{eqn:bounds}
\end{align}
\end{theorem}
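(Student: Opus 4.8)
The plan is to establish the two inclusions separately: a converse showing every achievable $(\bar\alpha,\bar\beta)$ obeys \eqref{eqn:bounds}, and an achievability showing every $(\bar\alpha,\bar\beta)$ satisfying \eqref{eqn:bounds} is realized by some code.

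\textbf{Converse.} I would begin by fixing the random variables describing an arbitrary $(5,4,4)$ exact-repair code: the source $W$ with $H(W)=B$; the node contents $W_1,\dots,W_5$ with $H(W_i)\le\alpha$; and, for each ordered pair $(h,f)$ of distinct nodes, the symbol $S_{h\to f}$ that helper $h$ transmits to regenerate node $f$, with $H(S_{h\to f})\le\beta$ and $H(S_{h\to f}\mid W_h)=0$. Data reconstruction gives $H(W\mid W_{\mathcal A})=0$ for every size-$4$ index set $\mathcal A$, and exact repair gives $H\bigl(W_f\mid\{S_{h\to f}:h\ne f\}\bigr)=0$ for each $f$ (the helper set is forced, since $n=d+1$). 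Three of the five inequalities are then immediate from the functional-repair cut-set bound $B\le\sum_{i=0}^{k-1}\min(\alpha,(d-i)\beta)$ of \cite{Dimakis:10}: bounding every $\min$ by $\alpha$ gives $B\le4\alpha$; bounding the first three terms by $\alpha$ and the last by $\beta$ gives $B\le3\alpha+\beta$; and bounding the $i$-th term by $(d-i)\beta$ gives $B\le10\beta$. The remaining two, $15\bar\alpha+10\bar\beta\ge6$ and $5\bar\alpha+10\bar\beta\ge3$, are strictly stronger than cut-set and constitute the substantive part; for these I would follow the computational route of \cite{Tian:JSAC13}: assemble the equality and inequality constraints above together with all Shannon-type (polymatroidal) inequalities on the joint entropies of the $26$ random variables; exploit the $S_5$ symmetry of the problem to pass to the much smaller cone of symmetric entropy vectors (without this reduction the linear program is intractable); and then, with $B$ normalized to $1$, solve $\min(15\alpha+10\beta)$ and $\min(5\alpha+10\beta)$ by linear programming, the optima of which should come out to $6$ and $3$. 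The ``proof'' I would record is the optimal dual solution of each program --- a table of nonnegative multipliers on the elemental inequalities and problem constraints --- rather than a narrative chain of inequalities.

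\textbf{Achievability.} The region \eqref{eqn:bounds} is a convex subset of the first quadrant, closed upward in each coordinate, and its lower-left boundary has exactly four corner points,
\[
\Bigl(\tfrac14,\tfrac14\Bigr),\qquad\Bigl(\tfrac{4}{15},\tfrac15\Bigr),\qquad\Bigl(\tfrac{3}{10},\tfrac{3}{20}\Bigr),\qquad\Bigl(\tfrac25,\tfrac1{10}\Bigr),
\]
obtained by intersecting consecutive active inequalities (ordering the five by the slope of their boundary line from vertical to horizontal). The first is the minimum-storage point, reachable by an MDS code on the whole file since here $d-k+1=1$, and the last is the minimum-bandwidth point, reachable by a product-matrix MBR code. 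For the two interior corners I would invoke the canonical layered codes of \cite{Tian:JSAC13}: split $B$ into layers, encode each layer with a small base code sitting at an extreme operating point, superimpose, and choose the layer sizes so that the resulting normalized pair equals $(\tfrac{4}{15},\tfrac15)$ or $(\tfrac{3}{10},\tfrac{3}{20})$ exactly. Having realized the four corners, every point of \eqref{eqn:bounds} then follows from space-sharing (striping independent codes makes the normalized achievable region convex) together with the trivial fact that one may inflate $\alpha$ or $\beta$ in a given code without harm.

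\textbf{Main obstacle.} I expect the crux to be the converse for $15\bar\alpha+10\bar\beta\ge6$ and $5\bar\alpha+10\bar\beta\ge3$: assembling a complete and correct constraint set, cutting the dimension enough via symmetry for the linear program to be solved, and --- the most delicate step --- distilling from the numerical optimum a clean, exactly rational dual certificate that a reader can check by hand. By comparison the achievability is routine, though matching the layered-code parameters to the interior corners $(\tfrac{4}{15},\tfrac15)$ and $(\tfrac{3}{10},\tfrac{3}{20})$ exactly, rather than to nearby points, still needs some care.
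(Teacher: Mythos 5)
Your proposal follows essentially the same route as the paper: the three loose inequalities come from the functional-repair (cut-set) outer bound, the two tight ones from the symmetry-reduced Yeung-style linear program whose dual multipliers are recorded as a table of sub-modular inequalities, and achievability from layered codes at the corner points (which you compute correctly) plus space-sharing. The only slip is a citation: the canonical layered codes are from \cite{Tian:14layered}, not \cite{Tian:JSAC13}.
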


\begin{figure}
\centering
\includegraphics[width=10cm]{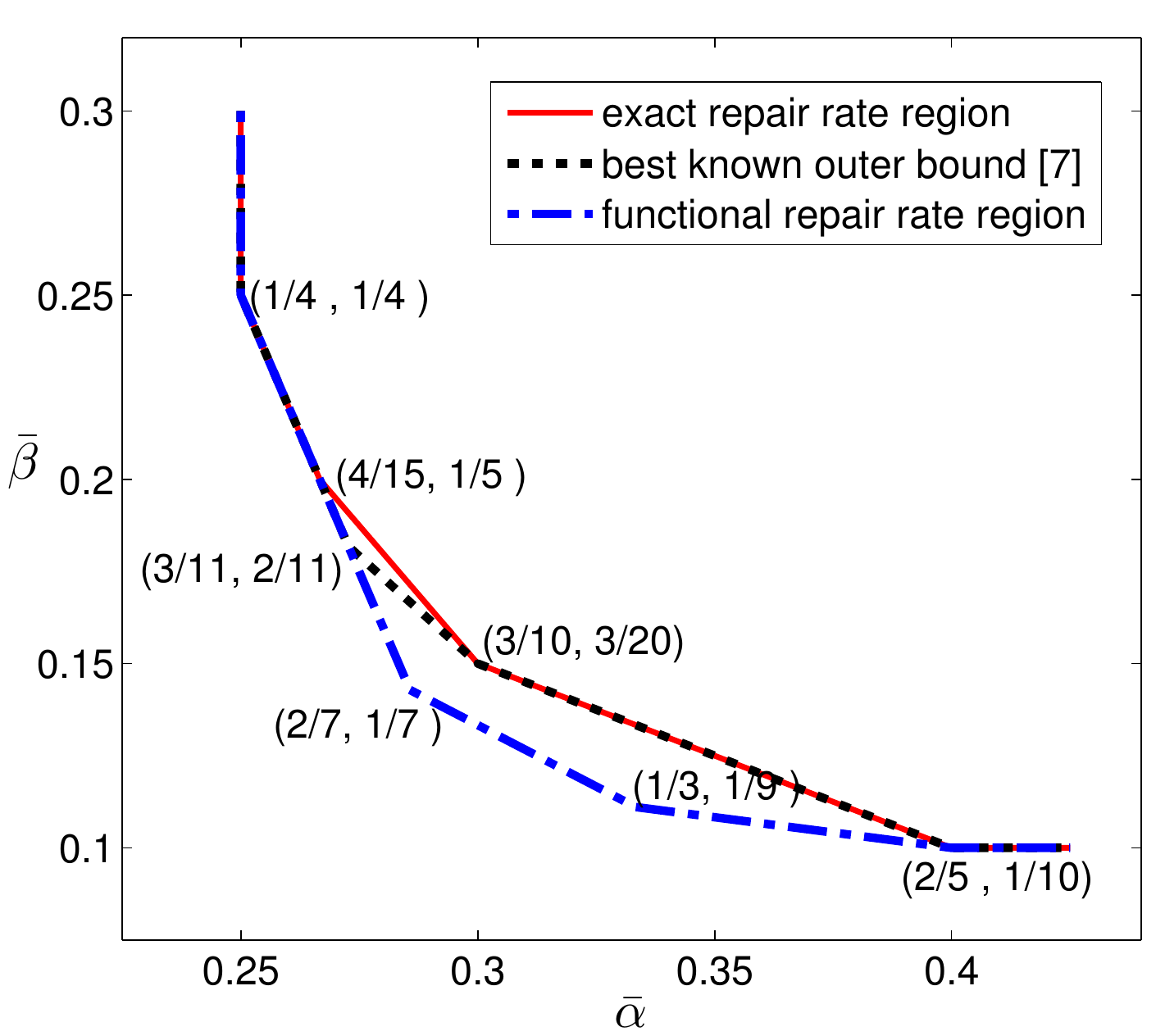}
\caption{Rate region of $(5,4,4)$ exact-repair regenerating codes. \label{fig:region}}
\end{figure}

This region is illustrated in Fig. \ref{fig:region}. It can be verified straightforwardly that this region is achievable using the canonical layered codes proposed in \cite{Tian:14layered}. The first, the second and the last inequalities in (\ref{eqn:bounds}) are previously known by viewing the functional-repair tradeoff as an outer bound to the exact-repair case. The remaining two are proved in the next section. 

This region is the same as identified in \cite{Prakash:15} for the restricted setting of linear regenerating codes, and thus there is no loss of optimality by considering only linear codes in this case. 

\section{A Converse Proof by Tabulation}

We use ${\alpha}$  and ${\beta}$ to denote the per-node storage capacity and per-helper repair bandwidth, respectively, before normalization by the data size $B$. The (random) helper information sent from node $i$ to node $j$ is denoted as $S_{i\rightarrow j}$ and the (random) information stored on node $i$ is denoted as $W_i$. Due to the symmetry of the problem \cite{Tian:JSAC13}, we can restrict the proof to symmetric codes without loss of optimality. 

The two inequalities are presented as two propositions, in the form of tabulation. Each inequality is given as two tables, the first of which lists the joint entropy terms in the proof, and the second of which gives the known sub-modular inequalities (Shannon-type inequalities) and the coefficients of these inequalities, whose last row, as the summation of all the other rows, is exactly the sought-after inequality. 
Note that each row in the second table is a sub-modular inequality (or an integer multiple of such an inequality), possibly after permutation of the indices for each entropy term. 

These tables are given in the form obtained almost directly from the computation without much human simplification. Thus in a sense they are the ``raw data'', and certain steps can be taken to make it more concise and human-friendly, the automation of which is part of our ongoing work. This is also the motivation to establish SCITL online collection, where the matrix version of the solutions can be accessed such that further data analysis can be done more conveniently, and future researchers can interpret the data in manners more meaningful to them. In addition, for more complex problems the tabulation we use here may become too cumbersome to present, and a data file is a more appropriate media.

\begin{prop}
\label{prop:bound1}
Any $(5,4,4)$ exact-repair codes must satisfy $15\alpha+10\beta\geq 6B$.
\end{prop}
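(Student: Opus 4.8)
The plan is to follow the computational converse of \cite{Tian:JSAC13}, now instantiated with the $(5,4,4)$ parameters. Since $15\alpha+10\beta\ge 6B$ is one of the two inequalities that does \emph{not} follow from the cut-set / functional-repair bound, the proof must genuinely exploit that the repaired content is reconstructed exactly (and by every admissible set of $d$ helpers), so the first move is to fix a finite ground set of random variables over which an entropy linear program can be set up. Natural candidates are a message variable $M$ with $H(M)=B$, the five stored contents $W_1,\dots,W_5$, and the repair messages $S_{i\to j}$. Including all twenty $S_{i\to j}$ would make the entropy polytope, which sits in $\mathbb{R}^{2^m-1}$ for $m$ variables, far too large, so a crucial preliminary is to select a minimal subcollection of repair messages that still exposes the exact-repair structure: for instance the four messages $S_{1\to 5},S_{2\to 5},S_{3\to 5},S_{4\to 5}$ that rebuild $W_5$, together with just enough messages to rebuild a second node, chosen so that the ``recover from any $4$'' property and the repair property interlock in the way that produces a bound beyond cut-set.

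Next I would write down the linear constraints that any such code imposes on the joint entropies of these variables: (i) $H(S_{i\to j}\mid W_i)=0$, each repair message being a deterministic function of the helper's content; (ii) $H(W_j\mid S_{1\to j},\dots,S_{4\to j})=0$ for each admissible repair, the failed node being rebuilt exactly; (iii) $H(M\mid W_{\mathcal S})=0$ for every $4$-subset $\mathcal S$, which also forces $H(W_{\mathcal S})\ge B$; and (iv) the rate bounds $H(W_i)\le\alpha$ and $H(S_{i\to j})\le\beta$. Then I would invoke the symmetry reduction of \cite{Tian:JSAC13}: since the problem is invariant under relabeling the nodes, an averaging/convexity argument shows there is no loss of optimality in restricting to symmetric codes, so each joint entropy depends only on the ``isomorphism type'' of its index set; this collapses many of the LP variables and is what makes the computation feasible.

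With the polytope in hand, the question ``does $15\alpha+10\beta\ge 6B$ follow from the elemental Shannon-type (submodularity and monotonicity) inequalities together with constraints (i)--(iv)?'' becomes a single linear feasibility/duality problem in Yeung's framework \cite{Yeung:97}. Solving it yields a nonnegative rational combination of (possibly index-permuted) elemental submodular inequalities and code constraints whose sum telescopes to exactly $15\alpha+10\beta-6B\ge 0$. That combination, recorded as the table of joint-entropy terms together with the table of inequalities and their coefficients, is the proof; to make it self-checking I would label each row by which submodular inequality it is (up to an index permutation) and verify by direct summation that the rows add to the claimed inequality. One could in principle also hunt for a hand-crafted chain realizing the same combination, but for a bound of this shape that is hard to find directly, which is exactly the motivation for the computational route.

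The step I expect to be the main obstacle is the first one: choosing the variable set so the LP is simultaneously small enough to solve and rich enough to be tight. Too few repair messages and the LP returns only a weaker inequality; too many and the $2^m$-dimensional polytope, with its correspondingly large number of elemental inequalities, becomes unmanageable. Getting this right -- and trusting that the symmetry reduction preserves tightness in this instance, which is justified by the convexity argument -- is where the real work lies; the remaining difficulty is expository, since the solver's certificate is ``raw'' and must be transcribed into the tables carefully, ideally with a later pass to condense it into a more human-readable form.
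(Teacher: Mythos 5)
Your outline reproduces the paper's method exactly: set up Yeung's entropy LP over a chosen subset of the variables $\{W_i\}$ and $\{S_{i\rightarrow j}\}$, impose the code constraints ($S_{i\rightarrow j}$ a function of $W_i$, exact repair from any $d=4$ helpers, reconstruction from any $k=4$ nodes, and the rate caps $\alpha$, $\beta$), exploit node-relabeling symmetry to collapse the LP, and then read off a dual certificate --- a nonnegative combination of elemental Shannon-type inequalities and problem constraints summing to $15\alpha+10\beta-6B\geq 0$ --- recorded as two tables. That is precisely what the paper does in Tables \ref{tab:correspondence} and \ref{table:cancellation}.

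The gap is that your proposal ends where the proof has to begin. For a certificate-style converse the certificate \emph{is} the proof; the surrounding framework establishes nothing by itself. You never exhibit a ground set that actually works --- the paper's certificate is built from the twelve joint entropies $T_3,\dots,T_{14}$ involving $W_1,W_2,W_3,W_5$ and the messages $S_{5\rightarrow4},S_{5\rightarrow3},S_{5\rightarrow2},S_{4\rightarrow5},S_{4\rightarrow3},S_{4\rightarrow2},S_{3\rightarrow5}$, which is not the ``all four helpers of node 5 plus enough to rebuild a second node'' collection you suggest as a first guess --- nor the thirteen weighted submodularity/monotonicity/constraint rows whose sum telescopes to $10\beta+15\alpha\geq 6B$. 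You correctly identify the choice of variable set as the main obstacle, but flagging the obstacle is not clearing it: until an explicit nonnegative combination is written down and checked to sum to the target, the inequality is not established, and there is no a priori guarantee that the LP over your proposed variable set would even return a bound this strong. To complete the argument you must either carry out the computation and transcribe the resulting certificate, or reproduce the paper's tables and verify row by row that each is a valid (index-permuted) elemental inequality or code constraint and that the rows add to the claimed bound.
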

\begin{proof}
See Table \ref{tab:correspondence} and Table \ref{table:cancellation}.
\end{proof}

\begin{prop}
\label{prop:bound2}
Any $(5,4,4)$ exact-repair codes must satisfy $5\alpha+10\beta\geq 3B$.
\end{prop}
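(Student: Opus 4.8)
The plan is to reproduce, for the $(5,4,4)$ parameters, the computational converse of \cite{Tian:JSAC13} and to present the resulting certificate in the same tabulated form as Proposition~\ref{prop:bound1}. First I would fix the ground set of random variables to be the node contents $W_1,\dots,W_5$ together with all helper messages $S_{j\rightarrow i}$, $i\neq j$, and record the defining relations of an exact-repair code as entropy constraints: $H(S_{j\rightarrow i}\mid W_j)=0$ (each helper message is a function of the helper's stored content); $H\bigl(W_i\mid S_{j_1\rightarrow i},S_{j_2\rightarrow i},S_{j_3\rightarrow i},S_{j_4\rightarrow i}\bigr)=0$ with $\{j_1,j_2,j_3,j_4\}=\{1,2,3,4,5\}\setminus\{i\}$ (exact repair of node $i$ from the other four); and $H(W_i\mid W_{\mathcal B})=0$ together with $H(W_{\mathcal B})\geq B$ for every $\mathcal B$ with $|\mathcal B|=4$ (reconstruction). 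The storage and bandwidth budgets $H(W_i)\leq\alpha$ and $H(S_{j\rightarrow i})\leq\beta$ are held in reserve and substituted only at the end. As in \cite{Tian:JSAC13}, the instance is invariant under the symmetric group permuting the five node labels, so I would impose this symmetry on the entropy function and work with orbit representatives; this reduction is what keeps the linear program at a tractable size.

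Next I would feed the reduced instance to Yeung's linear-programming framework for Shannon-type inequalities \cite{Yeung:97}: the unknowns are the joint entropies of (representatives of) subsets of the ground set, the ambient polyhedral cone is cut out by the elemental submodularity and monotonicity inequalities, and the code relations above are appended as equalities. Certifying $5\alpha+10\beta\geq 3B$ then amounts to a dual (Fourier--Motzkin) computation: find nonnegative multipliers on the submodular inequalities (each possibly after a permutation of indices, and possibly taken with integer multiplicity) and on the code constraints so that the weighted sum telescopes to $\sum_{i=1}^{5}H(W_i)+\tfrac12\sum_{i\neq j}H(S_{j\rightarrow i})\geq 3B$, which becomes exactly $5\alpha+10\beta\geq 3B$ once the budget bounds are invoked. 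The output is then laid out exactly as for Proposition~\ref{prop:bound1}: a first table enumerating every joint-entropy term occurring in the derivation, and a second table whose rows are the submodular inequalities used together with their coefficients, the final row being their sum and hence the asserted bound.

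The main obstacle is not the final verification, which is mechanical once the tables are produced, but the elimination itself: even after collapsing by symmetry one must isolate a set of joint-entropy terms rich enough that the combination closes yet small enough to tabulate, and the raw certificate so obtained is usually far from minimal. A useful consistency check is that the region must coincide with the linear-code region of \cite{Prakash:15}, and that the facet $5\bar\alpha+10\bar\beta\geq 3$ is tight along the entire boundary segment from the extreme point $(\bar\alpha,\bar\beta)=(2/5,1/10)$ to the corner $(\bar\alpha,\bar\beta)=(3/10,3/20)$, so the certificate cannot be degenerate there. To see why a genuinely joint argument is needed, note that the naive peeling — repeatedly replacing $W_i$ by its four incoming helper messages and discarding those that come from nodes already present — yields only $B\leq\alpha+6\beta$, which is a valid but non-facet inequality (it supports the region at a single vertex and does not imply $5\bar\alpha+10\bar\beta\geq 3$). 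Extracting the stronger bound therefore forces the derivation to use the conditional structure shared among several helper messages at once, and this is precisely what the computational search supplies and what the two tables below record.
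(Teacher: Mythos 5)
Your plan is precisely the paper's approach: set up the exact-repair and reconstruction relations as entropy equalities, exploit the symmetry of the problem to restrict to symmetric codes, and run Yeung's linear-programming framework to extract a nonnegative combination of elemental Shannon-type inequalities (each possibly with permuted indices) whose sum, after invoking $H(W_i)\leq\alpha$ and $H(S_{j\rightarrow i})\leq\beta$, is $5\alpha+10\beta\geq 3B$. Your bookkeeping is also consistent (five storage terms and twenty ordered helper pairs at weight $\tfrac12$ give $5\alpha+10\beta$), and your observation that the naive peeling only yields the non-facet bound $\alpha+6\beta\geq B$ correctly explains why a genuinely joint certificate is required.

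The gap is that you never exhibit the certificate. For this proposition the proof \emph{is} the pair of tables: the explicit list of joint-entropy terms (thirty-three of them in the paper's version, Table \ref{tab:correspondence2}) and the explicit rows of submodular inequalities with integer coefficients whose column sums collapse to $20\beta+10\alpha-6B\geq 0$ (Table \ref{table:cancellation2}). A description of the linear program whose dual would produce such a certificate, together with the assertion that the computation ``closes,'' is not a proof that the inequality is Shannon-derivable from the code constraints --- that is exactly the nontrivial content, and it cannot be taken on faith: a priori the LP might certify only a weaker bound, and indeed the whole point of \cite{Tian:JSAC13} is that one must verify the computed multipliers explicitly. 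Your placeholder ``what the two tables below record'' with no tables following leaves the proposition unproved. To complete the argument you would need to actually run the computation and print the resulting multiplier matrix (or an equivalent chain of inequalities), at which point verification is, as you say, mechanical.
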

\begin{proof}
See Table \ref{tab:correspondence2} and Table \ref{table:cancellation2}.
\end{proof}

\begin{table}[ct]
\begin{center}
\caption{The entropy terms used in the proof of Proposition \ref{prop:bound1}.}
\label{tab:correspondence}
\begin{tabular}{|c|c|}
\hline
$T_{ 1}$ & $\beta$ \\
$T_{ 2}$ & $\alpha$ \\
$T_{ 3}$ & $H(S_{5\rightarrow4},W_{1})$ \\
$T_{ 4}$ & $H(S_{5\rightarrow2},S_{4\rightarrow5},W_{2})$ \\
$T_{ 5}$ & $H(S_{5\rightarrow3},S_{4\rightarrow3},W_{1})$ \\
$T_{ 6}$ & $H(S_{5\rightarrow2},S_{4\rightarrow3},S_{3\rightarrow5},W_{2})$ \\
$T_{ 7}$ & $H(S_{5\rightarrow2},S_{4\rightarrow5},S_{4\rightarrow2},S_{3\rightarrow5},W_{1})$ \\
$T_{ 8}$ & $H(W_{2},W_{1})$ \\
$T_{ 9}$ & $H(S_{5\rightarrow4},W_{2},W_{1})$ \\
$T_{10}$ & $H(S_{5\rightarrow2},W_{2},W_{1})$ \\
$T_{11}$ & $H(S_{5\rightarrow2},S_{4\rightarrow5},W_{2},W_{1})$ \\
$T_{12}$ & $H(S_{5\rightarrow2},S_{4\rightarrow3},S_{3\rightarrow5},W_{2},W_{1})$ \\
$T_{13}$ & $H(S_{4\rightarrow5},W_{5},W_{2},W_{1})$ \\
$T_{14}$ & $H(S_{5\rightarrow3},S_{4\rightarrow3},W_{3},W_{2},W_{1})$ \\
$T_{15}$ & $B$ \\
\hline
\end{tabular}
\end{center}
\end{table}

\begin{table*}[tcb]
\setlength{\tabcolsep}{4pt}
\begin{center}
\caption{Proof of Proposition \ref{prop:bound1} with terms defined in Table \ref{tab:correspondence}.}
\label{table:cancellation}
\begin{tabular}{|ccccc ccccc ccccc|}
\hline
$\beta$  &$\alpha$  &$T_{ 3}$  &$T_{ 4}$  &$T_{ 5}$  &$T_{ 6}$  &$T_{ 7}$  &$T_{ 8}$  &$T_{ 9}$  &$T_{10}$  &$T_{11}$  &$T_{12}$  &$T_{13}$  &$T_{14}$  &$B$  \\
\hline
$  5$     &$  5$     &$ -5$     &          &          &          &          &          &          &          &          &          &          &          &           \\
          &$  1$     &$  1$     &          &          &          &          &          &$ -1$     &          &          &          &          &          &           \\
          &$  4$     &          &          &$  4$     &          &          &          &          &          &          &          &          &$ -4$     &           \\
$  5$     &          &$  5$     &          &$ -5$     &          &          &          &          &          &          &          &          &          &           \\
          &$  2$     &          &          &$  2$     &          &          &          &          &          &          &          &$ -2$     &          &           \\
          &          &          &          &          &          &          &          &          &          &$ -2$     &          &$  2$     &$  2$     &$ -2$     \\
          &$  2$     &          &          &          &          &          &$ -1$     &          &          &          &          &          &          &           \\
          &          &          &$ -1$     &          &$  1$     &          &          &          &          &$  1$     &$ -1$     &          &          &           \\
          &          &          &          &          &$ -1$     &          &          &          &          &$  1$     &          &          &$  1$     &$ -1$     \\
          &          &          &          &$ -1$     &          &$  1$     &          &$  1$     &          &          &          &          &          &$ -1$     \\
          &$  1$     &          &          &          &          &          &          &          &$  1$     &          &          &          &          &$ -1$     \\
          &          &$ -1$     &$  1$     &          &          &          &$  1$     &          &$ -1$     &          &          &          &          &           \\
          &          &          &          &          &          &$ -1$     &          &          &          &          &$  1$     &          &$  1$     &$ -1$    \\
\hline     
\hline
$10$ &$15$  &        &        &        &         &       &        &        &        &       &             &              &     &  -6   \\\hline
\end{tabular}
\end{center}
\end{table*}

\begin{table}[ct]
\begin{center}
\caption{The entropy terms used in the proof of Proposition \ref{prop:bound2}.}
\label{tab:correspondence2}
\begin{tabular}{|c|c|}
\hline
$T_{ 1}$ & $\beta$ \\
$T_{ 2}$ & $H(S_{5\rightarrow3},S_{4\rightarrow3})$ \\
$T_{ 3}$ & $H(S_{5\rightarrow4},S_{5\rightarrow3},S_{4\rightarrow3})$ \\
$T_{ 4}$ & $H(S_{5\rightarrow4},S_{4\rightarrow5},S_{3\rightarrow5})$ \\
$T_{ 5}$ & $H(S_{5\rightarrow4},S_{5\rightarrow2},S_{4\rightarrow2},S_{3\rightarrow2})$ \\
$T_{ 6}$ & $H(S_{5\rightarrow4},S_{5\rightarrow1},S_{4\rightarrow1},S_{3\rightarrow1},S_{2\rightarrow1})$ \\
$T_{ 7}$ & $\alpha$ \\
$T_{ 8}$ & $H(S_{5\rightarrow4},W_{1})$ \\
$T_{ 9}$ & $H(S_{5\rightarrow2},W_{2})$ \\
$T_{10}$ & $H(S_{5\rightarrow2},S_{4\rightarrow2},W_{2})$ \\
$T_{11}$ & $H(S_{5\rightarrow4},S_{4\rightarrow5},W_{1})$ \\
$T_{12}$ & $H(S_{5\rightarrow3},S_{4\rightarrow3},W_{1})$ \\
$T_{13}$ & $H(S_{5\rightarrow2},S_{4\rightarrow5},W_{2})$ \\
$T_{14}$ & $H(S_{5\rightarrow2},S_{4\rightarrow2},S_{3\rightarrow5},W_{2})$ \\
$T_{15}$ & $H(S_{5\rightarrow4},S_{5\rightarrow2},S_{4\rightarrow2},W_{2})$ \\
$T_{16}$ & $H(S_{5\rightarrow4},S_{5\rightarrow3},S_{4\rightarrow3},W_{1})$ \\
$T_{17}$ & $H(S_{5\rightarrow3},S_{4\rightarrow5},S_{4\rightarrow3},S_{3\rightarrow5},W_{1})$ \\
$T_{18}$ & $H(W_{2},W_{1})$ \\
$T_{19}$ & $H(S_{5\rightarrow2},W_{2},W_{1})$ \\
$T_{20}$ & $H(S_{5\rightarrow2},S_{4\rightarrow2},W_{2},W_{1})$ \\
$T_{21}$ & $H(S_{5\rightarrow4},S_{4\rightarrow5},W_{2},W_{1})$ \\
$T_{22}$ & $H(S_{5\rightarrow2},S_{4\rightarrow5},W_{2},W_{1})$ \\
$T_{23}$ & $H(S_{5\rightarrow1},S_{4\rightarrow2},W_{2},W_{1})$ \\
$T_{24}$ & $H(S_{5\rightarrow2},S_{4\rightarrow2},S_{3\rightarrow2},W_{1})$ \\
$T_{25}$ & $H(S_{5\rightarrow2},S_{4\rightarrow2},S_{3\rightarrow5},W_{2},W_{1})$ \\
$T_{26}$ & $H(S_{5\rightarrow2},S_{5\rightarrow1},S_{4\rightarrow2},S_{3\rightarrow2},W_{1})$ \\
$T_{27}$ & $H(S_{5\rightarrow4},S_{5\rightarrow2},S_{4\rightarrow2},S_{3\rightarrow2},W_{1})$ \\
$T_{28}$ & $H(S_{5\rightarrow2},S_{5\rightarrow1},S_{4\rightarrow5},S_{4\rightarrow2},S_{3\rightarrow2},W_{1})$ \\
$T_{29}$ & $H(S_{4\rightarrow5},W_{5},W_{2},W_{1})$ \\
$T_{30}$ & $H(S_{5\rightarrow3},S_{4\rightarrow3},W_{2},W_{1})$ \\
$T_{31}$ & $H(S_{5\rightarrow3},S_{5\rightarrow2},S_{4\rightarrow3},W_{2},W_{1})$ \\
$T_{32}$ & $H(S_{5\rightarrow3},S_{5\rightarrow2},S_{4\rightarrow3},S_{4\rightarrow2},W_{2},W_{1})$ \\
$T_{33}$ & $B$ \\
\hline
\end{tabular}
\end{center}
\end{table}

\begin{landscape}
\begin{table*}[tcb]
\setlength{\tabcolsep}{2.7pt}
\begin{center}
\caption{Proof of Proposition \ref{prop:bound2} with terms defined in Table \ref{tab:correspondence2}.}
\label{table:cancellation2}
\begin{tabular}{|ccccc ccccc ccccc ccccc ccccc ccccc ccc|}
\hline
$\beta$  &$T_{ 2}$  &$T_{ 3}$  &$T_{ 4}$  &$T_{ 5}$  &$T_{ 6}$  &$\alpha$  &$T_{ 8}$  &$T_{ 9}$  &$T_{10}$  &$T_{11}$  &$T_{12}$  &$T_{13}$  &$T_{14}$  &$T_{15}$  &$T_{16}$  &$T_{17}$  &$T_{18}$  &$T_{19}$  &$T_{20}$  &$T_{21}$  &$T_{22}$  &$T_{23}$  &$T_{24}$  &$T_{25}$  &$T_{26}$  &$T_{27}$  &$T_{28}$  &$T_{29}$  &$T_{30}$  &$T_{31}$  &$T_{32}$  &$B$  \\
\hline
$  1$     &          &          &          &          &          &          &          &          &          &          &$  1$     &          &          &          &          &          &          &          &$ -1$     &          &          &          &          &          &          &          &          &          &          &          &          &           \\
          &$  9$     &          &          &          &          &$  9$     &          &          &          &          &$ -9$     &          &          &          &          &          &          &          &          &          &          &          &          &          &          &          &          &          &          &          &          &           \\
$ 22$     &$-11$     &          &          &          &          &          &          &          &          &          &          &          &          &          &          &          &          &          &          &          &          &          &          &          &          &          &          &          &          &          &          &           \\
          &          &          &          &          &          &          &$ -1$     &          &          &          &$  2$     &          &          &          &          &          &          &          &          &          &          &          &$ -1$     &          &          &          &          &          &          &          &          &           \\
          &          &          &          &          &          &          &$ -3$     &          &          &          &$  6$     &          &          &          &          &          &          &          &$ -3$     &          &          &          &          &          &          &          &          &          &          &          &          &           \\
          &$  3$     &          &          &          &          &          &          &          &          &          &          &          &          &          &          &          &$  3$     &          &          &          &          &          &          &          &          &          &          &          &$ -3$     &          &          &           \\
          &          &          &          &          &          &          &          &          &          &          &          &          &$ -2$     &          &          &          &          &          &$  2$     &          &          &$  2$     &          &          &          &          &          &$ -2$     &          &          &          &           \\
$ -1$     &          &$  1$     &          &          &          &          &$  1$     &          &          &          &          &          &          &          &$ -1$     &          &          &          &          &          &          &          &          &          &          &          &          &          &          &          &          &           \\
          &          &          &          &          &          &          &          &          &$ -2$     &          &          &          &$  2$     &          &          &          &          &          &$  2$     &          &          &          &          &$ -2$     &          &          &          &          &          &          &          &           \\
          &          &          &          &          &          &          &          &          &          &          &          &          &          &          &          &          &          &$ -2$     &          &          &$  2$     &          &          &          &          &          &          &$  2$     &          &          &          &$ -2$     \\
          &          &          &          &          &          &          &          &          &          &          &          &          &          &          &          &          &          &          &          &          &$ -2$     &          &          &$  2$     &          &          &          &          &$  2$     &$ -2$     &          &           \\
          &          &          &          &          &          &          &          &          &          &          &          &          &          &          &          &          &$ -2$     &$  4$     &          &          &          &$ -2$     &          &          &          &          &          &          &          &          &          &           \\
$ -1$     &          &          &          &          &          &$  1$     &          &$  1$     &          &          &          &          &          &          &          &          &$ -1$     &          &          &          &          &          &          &          &          &          &          &          &          &          &          &           \\
          &          &$ -1$     &          &          &          &          &$  1$     &          &          &          &          &          &          &$  1$     &          &          &          &$ -1$     &          &          &          &          &          &          &          &          &          &          &          &          &          &           \\
          &          &          &$ -1$     &          &          &          &          &          &$  1$     &$  1$     &          &          &          &          &          &          &          &$ -1$     &          &          &          &          &          &          &          &          &          &          &          &          &          &           \\
          &          &          &          &          &$ -1$     &          &          &          &          &          &          &          &          &          &          &          &          &          &          &          &          &          &$  1$     &          &$  1$     &          &          &          &$ -1$     &          &          &           \\
          &$ -1$     &          &$  1$     &          &          &          &$  1$     &          &          &          &          &$ -1$     &          &          &          &          &          &          &          &          &          &          &          &          &          &          &          &          &          &          &          &           \\
$ -1$     &          &          &          &$  1$     &          &          &$  1$     &          &          &          &          &          &          &          &          &          &          &          &          &          &          &          &          &          &          &$ -1$     &          &          &          &          &          &           \\
          &          &          &          &$ -1$     &$  1$     &          &          &          &          &          &          &          &          &          &$  1$     &          &          &          &          &          &          &          &          &          &          &$ -1$     &          &          &          &          &          &           \\
          &          &          &          &          &          &          &          &          &          &$ -1$     &          &          &          &          &          &$  1$     &          &          &          &$  1$     &          &          &          &          &          &          &          &          &          &          &          &$ -1$     \\
          &          &          &          &          &          &          &          &          &          &          &          &          &          &          &          &          &          &          &          &          &          &          &          &          &$ -1$     &          &$  1$     &          &          &$  1$     &          &$ -1$     \\
          &          &          &          &          &          &          &          &          &          &          &          &          &          &          &          &          &          &          &          &          &          &          &          &          &          &          &$ -1$     &          &          &$  1$     &$  1$     &$ -1$     \\
          &          &          &          &          &          &          &          &          &          &          &          &          &          &          &          &$ -1$     &          &          &          &          &          &          &          &          &          &$  2$     &          &          &          &          &$ -1$     &           \\
          &          &          &          &          &          &          &          &          &          &          &          &          &          &          &          &          &          &          &          &$ -1$     &          &          &          &          &          &          &          &          &$  2$     &          &          &$ -1$     \\
          &          &          &          &          &          &          &          &$ -1$     &$  1$     &          &          &$  1$     &          &$ -1$     &          &          &          &          &          &          &          &          &          &          &          &          &          &          &          &          &          &           \\
\hline
\hline
$20$      &          &          &          &          &          &     $10$     &          &   &     &          &          &     &          &     &          &          &          &          &          &          &          &          &          &          &          &          &          &          &          &          &          &       -6    \\
\hline
\end{tabular}
\end{center}
\end{table*}
\end{landscape}

\section{Conclusion}

The rate region of the $(5,4,4)$ exact-repair regenerating codes is characterized in this note. This is part of the online collection of ``Solutions of Computed Information Theoretic Limits (SCITL)'' hosted at \cite{TianWebpage}, which hopefully in the future can serve as a data depot for information theoretic limits obtained through computational approaches. Several results in this collection requires non-trivial generalization or variation of the approach outlined in \cite{Tian:JSAC13}, the details of which will be presented elsewhere. We welcome contributions from researchers in the field who have developed information-theoretic bounds using computational approaches, either through approaches similarly to or completely different from that in \cite{Tian:JSAC13}. 

\bibliographystyle{IEEEbib}

\end{document}